\algnewcommand\algorithmicinput{\textbf{Input:}}
\algnewcommand\Input{\item[\algorithmicinput]}
\algnewcommand\algorithmicoutput{\textbf{Output:}}
\algnewcommand\Output{\item[\algorithmicoutput]}
\newtheorem{theorem}{Theorem}
\newtheorem{lemma}{Lemma}
\newtheorem{corollary}{Corollary}
\newtheorem{definition}{Definition}
\newcommand{\eps}{\epsilon}
\newcommand{\diag}{\mathtt{diag}}
\newcommand{\ali}[1]{\|a_{., #1}\|_1}
\newcommand{\alo}[1]{\|a_{#1, .}\|_1}
\newcommand{\x}{\mathbf{x}}
\newcommand{\xt}[1]{\mathbf{x^{(#1)}}}
\newcommand{\fb}{f^{(\mathcal{B})}}
\newcommand{\fbk}[1]{f^{(\mathcal{B}_{#1})}}
\newcommand{\gbk}[1]{G^{(\mathcal{B}_{#1})}}
\newcommand{\B}{\mathcal{B}}
\DeclareMathOperator*{\argmax}{arg\,max}
\title{Strictly Balancing Matrices in Polynomial Time Using Osborne's Iteration}
\author{Rafail Ostrovsky\thanks{Research supported in part by NSF grants 1065276, 1118126 and 1136174, US-Israel BSF grants, OKAWA Foundation Research Award, IBM Faculty Research Award, Xerox Faculty Research Award, B. John Garrick Foundation
Award, Teradata Research Award, and Lockheed-Martin Corporation Research Award.
This material is also based upon work supported in part by DARPA
Safeware program. The views expressed are those of the authors and do not reflect the official policy or position of the Department of Defense or the U.S. Government.} \\
\and
Yuval Rabani\thanks{Research supported in part by ISF grant 956-15, by BSF grant 
2012333, and by I-CORE Algo.} \\
\and
Arman Yousefi\footnotemark[1]}
\begin{document}

\maketitle
\begin{abstract}
Osborne's iteration is a method for balancing $n\times n$ matrices 
which is widely used in linear algebra packages, as balancing preserves
eigenvalues and stabilizes their numeral computation. The iteration
can be implemented in any norm over $\mathbb{R}^n$, but it is normally 
used in the $L_2$ norm. The choice of norm not only affects the desired
balance condition, but also defines the iterated balancing step itself.

In this paper we focus on Osborne's iteration in any $L_p$ norm, where 
$p < \infty$. We design a specific implementation of Osborne's iteration 
in any $L_p$ norm that converges to a strictly $\eps$-balanced matrix in 
$\tilde{O}(\eps^{-2}n^{9} K)$ iterations, where $K$ measures, roughly, 
the {\em number of bits} required to represent the entries of the input 
matrix.

This is the first result that proves that Osborne's iteration in the $L_2$
norm (or any $L_p$ norm, $p < \infty$) strictly balances matrices in
polynomial time. This is a substantial improvement over our recent 
result (in SODA 2017) that showed weak balancing in $L_p$ 
norms. Previously, Schulman and Sinclair (STOC 2015) showed 
strong balancing of Osborne's iteration in the $L_\infty$ norm. Their 
result does not imply any bounds on strict balancing in other norms. 
\end{abstract}

\thispagestyle{empty}
\newpage
\setcounter{page}{1}

\section{Introduction}

\paragraph{Problem statement and motivation.}
This paper analyzes the convergence properties of Osborne's
celebrated iteration~\cite{Osborne} for balancing matrices. Given
a norm $\|\cdot\|$ in $\mathbb{R}^n$, an $n\times n$ matrix
$A$ is balanced if and only if for all $i$, the $i$-th row of $A$
and the $i$-th column of $A$ have the same norm. The problem
of balancing a matrix $A$ is to compute a diagonal matrix $D$ 
such that $D A D^{-1}$ is balanced. The main motivation behind
this problem is that balancing a matrix does not affect its
eigenvalues, and balancing matrices in the $L_2$ norm increases
the numerical stability of eigenvalue computations~\cite{Osborne, kressner}.
Balancing also has a positive impact on the computational time
needed for computing eigenvalues~(\cite[section 1.4.3]{kressner}).
In practice, it is sufficient to get a good approximation to the
balancing problem. For $\alpha\ge 1$, a matrix $B = D A D^{-1}$ 
is an $\alpha$-approximation to the problem of balancing $A$ if 
and only if for all $i$, the ratio between the maximum and
minimum of the norms of the $i$-th row and column is 
bounded by $\alpha$. It is desirable to achieve $\alpha = 1+\eps$
for some small $\eps > 0$. A matrix $B$ that satisfies this
relaxed balancing condition is also said to be strictly 
$\eps$-balanced.

Osborne's iteration attempts to compute the diagonal matrix
$D$ by repeatedly choosing an index $i$ and balancing the
$i$-th row and column (this multiplies the $i$-th diagonal entry
of $D$ appropriately).
Osborne proposed this iteration in the
$L_2$ norm, and suggested round-robin choice of index to
balance.
However, other papers consider the iteration in other norms
and propose alternative choices of index to 
balance~\cite{parlett, schulman, ORY16}. 
Notice that a change of norm not only changes the target 
balance condition, but also changes the iteration itself,
as in each step a row-column pair is balanced in the given 
norm.
An implementation of Osborne's iteration is used in most 
numerical algebra packages, including MATLAB, LAPACK,
and EISPACK, and is empirically efficient
(see~\cite{kressner, spectra2005} for further background).
The main theoretical question about Osborne's iteration is 
its rate of convergence. How many rounds of the iteration 
are provably sufficient to get a strictly $\eps$-balanced matrix?

\paragraph{Our results.}
We consider Osborne's iteration in $L_p$ norms for finite $p$.
We design a new simple choice of the iteration (i.e., a rule
to choose the next index to balance), and we prove that this
variant provides a polynomial time approximation
scheme to the balancing problem. More specifically, we show
that in the $L_1$ norm, our implementation converges to a
strictly $\eps$-balanced matrix in 
$O\left(\eps^{-2}n^{9}\log(wn/\eps)\log w/\log n\right)$
iterations, where $\log w$ is a lower bound on the number
of bits required to represent the entries of $A$ (exact
definitions await Section~\ref{sec:preliminaries}).
The time complexity of these iterations is
$O\left(\eps^{-2}n^{10}\log(wn/\eps)\log w\right)$ 
arithmetic operations over $O(n\log w)$-bit numbers.
This result implies similar bounds for any $L_p$ norm where
$p$ is fixed, and in particular the important case of $p=2$.
This is because applying Osborne's iteration in the
$L_p$ norm to $A = (a_{ij})_{n\times n}$ is equivalent
to applying the iteration in the $L_1$ norm to $(a_{ij}^p)_{n\times n}$.
Of course, the bit representation complexity of the matrix, and
thus the bound on the number of iterations, grows by a factor
of $p$.

Our results give the first theoretical analysis that indicates that
Osborne's iteration in the $L_2$ norm, or any $L_p$ norm for
finite $p$, is indeed efficient in the worst case. This resolves the 
question that has been open since 1960. Previously, such a result 
was obtained only for the $L_\infty$ norm~\cite{schulman}. 
Concerning the convergence rate for the $L_p$ norms discussed 
here, we recently published a result~\cite{ORY16} that considers 
a much weaker notion of approximation. The previous result 
only shows the rate of
convergence to a matrix that is approximately balanced in an 
average sense. The matrix might still have row-column pairs that 
are highly unbalanced. The implementations in the common
numerical linear algebra packages use as a stopping condition
the strict notion of balancing, and not this weaker notion.
We discuss previous work in greater detail 
below.

\paragraph{Previous work.}
Osborne~\cite{Osborne} studied the $L_2$ norm version of matrix 
balancing, proved the uniqueness of the $L_2$ solution,
designed the iterative algorithm discussed above, and
proved that it converges in the limit to a balanced matrix
(without bounding the convergence rate). 
Parlett and Reinsch~\cite{parlett} generalized Osborne's iteration 
to other norms. Their implementation is the one widely used
in practice (see Chen~\cite[Section 3.1]{chenThesis}, also the
book~\cite[Chapter 11]{Numerical} and the
code in~\cite{EISPACK}). Grad~\cite{grad} proved convergence 
in the limit for the $L_1$ version (again without bounding the
running time), and Hartfiel~\cite{hartfiel} showed that the $L_1$ solution
is unique. Eaves et al.~\cite{eaves} gave a characterization of
balanceable matrices. Kalantari et al.~\cite{khachiyan} gave an
algorithm for $\eps$-balancing a matrix in the $L_1$ norm.
The algorithm reduces the problem to unconstrained convex
optimization and uses the ellipsoid algorithm to approximate
the optimal solution. This generates a weakly $\eps$-balanced
matrix, which satisfies the following definition. Given $\eps>0$, 
a matrix $A = (a_{ij})_{n\times n}$ is weakly $\eps$-balanced if 
and only if
$\sqrt{\sum_{i=1}^n (\|a_{.,i}\| - \|a_{i,.}\|)^2} \le \eps\cdot \sum_{i,j} |a_{i,j}|$.
Compare this with the stronger condition of being strictly 
$\eps$-balanced, which we use in this paper, and numerical 
linear algebra packages use as a stopping condition.
This condition requires that for every $i\in\{1,2,\dots,n\}$,
$\max\{\|a_{.,i}\|,\|a_{i,.}\|\}\le (1+\eps)\cdot \min\{\|a_{.,i}\|,\|a_{i,.}\|\}$.
In $L_\infty$, Schneider and Schneider~\cite{schneider} gave a
polynomial time algorithm that exactly balances a matrix. 
The algorithm does not use Osborne's iteration. Its running
time was improved by Young et al.~\cite{young91fasterparametric}. Both 
algorithms rely on iterating over computing a minimum mean cycle 
in a weighted strongly connected digraph, then contracting the cycle.
Schulman and Sinclair~\cite{schulman}
were the first to provide a quantitative bound on the running time
of Osborne's iteration. They proposed a carefully designed 
implementation of Osborne's iteration in the $L_\infty$ norm 
that strictly $\eps$-balances an $n\times n$ matrix $A$ in 
$O(n^3 \log(\varrho n/\eps))$ iterations, where $\varrho$ measures 
the initial $L_\infty$ imbalance of $A$. Their proof is an intricate 
case analysis. Finally, in~\cite{ORY16} we recently proved
that a logarithmic dependence on $1/\eps$ is impossible in the
$L_1$ norm (the lower bound is $\Omega(1/\sqrt{\eps})$). In the
same paper we also showed that several implementations of Osborne's 
iteration in $L_p$ norms, including the original implementation, converge 
to a weakly $\eps$-balanced matrix in polynomial time (which, in fact,
can be either nearly linear in $n$ or nearly linear in $1/\eps$). The
result of~\cite{ORY16} is derived by observing that Osborne's iteration can
be interpreted as an implementation of coordinate descent to
optimize the convex function from~\cite{khachiyan}. This is the starting
point of this paper, but to make the approach guarantee strict balancing,
we need to revise substantially previous implementations using novel 
algorithmic ideas. The main difficulty is the need to handle the different 
scales of row/column norm values; an index may shift between scales 
over time as a side-effect of balancing other indices. Moreover, the
analysis of the convergence rate is more complicated, and requires
additional ideas.

\paragraph{Our contribution.}
The main difficulty with respect to previous work is the following. The 
convergence rate of coordinate descent can be bounded effectively 
as long as there is a choice of coordinate (i.e., index) for which the 
drop in the objective function in a single step is non-negligible compared 
with the current objective value. But if this is not the case, then one 
can argue only about the balance of each index relative to the sum 
of norms of all rows and columns. Indices that have relatively heavy 
weight (row norm $+$ column norm) will indeed be balanced at this 
point. However, light-weight indices may be highly unbalanced. The
naive remedy to this problem is to work down by scales. After
balancing the matrix globally, heavy-weight indices are balanced,
approximately, so they can be left alone, deactivated. Now there 
are light-weight indices that have become heavy-weight with respect 
to the remaining active nodes, so we can continue balancing the
active indices until the relatively heavy-weight among them become
approximately balanced, and so forth. The problem with the naive
solution is that balancing the active indices shifts the weights of both
active and inactive indices, and they move out of their initial scale.
If the scale sets of indices keep changing, it is hard to argue that
the process converges. Shifting between scales is precisely what 
our algorithm and proof deal with. Light-weight indices that have
become heavy-weight are easy to handle. They can keep being
active. Heavy-weight indices that have become light-weight cannot
continue to be inactive, because they are no longer guaranteed to
be approximately balanced. Thus, in order to analyze convergence
effectively, we need to bound the number (and global effect on weight)
of these reactivation events.

\section{Preliminaries}\label{sec:preliminaries}

The input is a real square matrix $A = (a_{ij})_{n\times n}$.
We denote the $i$-th row of such a matrix by $a_{i,.}$ and
the $i$-th column by $a_{.,i}$. We also use the notation
$[n] = \{1,2,\dots,n\}$. The matrix $A$ is {\em balanced} 
in the $L_p$ norm iff $\|a_{.,i}\|_p = \|a_{i,.}\|_p$ for every 
index $i\in[n]$. Since the condition for being balanced
depends neither on the signs of the entries of $A$ nor
on the diagonal values, we will assume without loss of
generality that $A$ is non-negative with zeroes on the
diagonal.

An invertible diagonal matrix $D=\diag(d_1, \cdots, d_n)$ 
{\em balances} $A$ in the $L_p$ norm iff $DAD^{-1}$ is 
balanced in the $L_p$ norm. A matrix $A$ is {\em balanceable} 
iff there exists an invertible diagonal matrix $D$ that balances 
$A$. Balancing a matrix $A=(a_{ij})_{n\times n}$ in the $L_p$ 
norm is equivalent to balancing the matrix $(a_{ij}^p)_{n\times n}$
in the $L_1$ norm. Therefore, for the rest of the paper we focus
on balancing matrices in the $L_1$ norm.

We use $a_{\min}$ to denote the minimum non-zero entry of 
$A$. We also define $w = \frac{1}{a_{\min}}\cdot\sum_{ij} a_{ij}$.
\begin{definition}\label{def:balanced}
Given $\eps > 0$ and an $n\times n$ matrix $A$,
we say that the index $i$ of $A$ (where $i\in [n]$) 
is {\em $\eps$-balanced} iff
$$
\frac{\max\left\{\ali{i},\alo{i}\right\}}{\min\left\{\ali{i},\alo{i}\right\}}\le 1 + \eps.
$$
We say that $A$ is {\em strictly $\eps$-balanced} iff every index 
$i$ of $A$ is $\eps$-balanced.
\end{definition}
 
Any implementation of Osborne's iteration can be thought of as 
computing vectors $\xt{t}\in \mathbb{R}^n$ for $t=1,2,\dots$, where 
iteration $t$ is applied to the matrix $(a_{ij}^{(t)}) = D A D^{-1}$ for 
$D = \diag(e^{x_1^{(t)}},e^{x_2^{(t)}},\dots,e^{x_n^{(t)}})$.
Thus, for all $i,j$, $a_{ij}^{(t)} = a_{ij}\cdot e^{x_i^{(t)} - x_j^{(t)}}$.
Initially, $\x^{(1)} = (0, 0,\dots, 0)$. A balancing step of the iteration
chooses an index $i$, then sets 
$x_i^{(t+1)} = x_i^{(t)} + \frac 1 2\cdot\left(\ln\|a_{.,i}^{(t)}\|_1-\ln\|a_{i,.}^{(t)}\|_1\right)$,
and for all $j\ne i$, keeps $x_j^{(t+1)} = x_j^{(t)}$.
For $\x\in\mathbb{R}^n$, we denote the sum of entries of the
matrix $D A D^{-1}$ for $D = \diag(e^{x_1},e^{x_2},\dots,e^{x_n})$
by $f(\x) = f_A(\x) = \sum_{ij} a_{ij}\cdot e^{x_i - x_j}$.
For any $n\times n$ non-negative matrix $B = (b_{ij})$, we denote 
by $G_B$ the weighted directed graph with node set $\{1,2,\dots,n\}$,
arc set $\{(i,j): b_{ij} > 0\}$, where an arc $(i,j)$ has weight $b_{ij}$.
We will assume henceforth that the undirected version of $G_A$
is connected, otherwise we can handle each connected component
separately.
We quote a few useful lemmas. The references contain the proofs.
\begin{lemma}[Theorem 1 in Kalantari et al.~\cite{khachiyan}]\label{lm: bal-opt}
The input matrix $A$ is balanceable if and only if $G_A$ is strongly 
connected. Moreover, $D A D^{-1}$ is balanced in the $L_1$ norm
if and only if $D = \diag(e^{x_1^*},e^{x_2^*},\dots,e^{x_n^*})$,
where $\x^* = (x_1^*,x_2^*,\dots,x_n^*)$ minimizes
$f(\x)$ over $\x\in\mathbb{R}^n$.
\end{lemma}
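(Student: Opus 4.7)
The plan is to identify balanced matrices with critical points of $f$ via a gradient computation, promote critical points to global minimizers using convexity, and then tie the existence of a minimizer to the strong connectivity of $G_A$. A direct calculation gives
\[
\frac{\partial f}{\partial x_k}(\x)\;=\;\sum_{j} a_{kj}e^{x_k-x_j}\;-\;\sum_{i} a_{ik}e^{x_i-x_k}\;=\;\|b_{k,.}\|_1-\|b_{.,k}\|_1,
\]
where $b_{ij}=a_{ij}e^{x_i-x_j}$ are the entries of $B=DAD^{-1}$ with $D=\diag(e^{x_1},\dots,e^{x_n})$. Hence $\nabla f(\x)=0$ precisely when $B$ is $L_1$-balanced. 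Since $f$ is a nonnegative combination of exponentials of linear functionals of $\x$, it is convex on $\RR^n$, so every critical point is a global minimizer and conversely. Combined with the gradient identity, this already yields the second assertion of the lemma (once we have existence), and reduces the first assertion to: a minimizer of $f$ exists iff $G_A$ is strongly connected.

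For the easy direction, suppose $G_A$ is not strongly connected. I would pick vertices $u,v$ with $v$ unreachable from $u$ in $G_A$ and let $S$ be the set of nodes reachable from $u$, so that $\emptyset\neq S\subsetneq[n]$ and $a_{ij}=0$ for all $i\in S$, $j\in\bar S$. Along the ray $\x=t\cdot\mathbf{1}_S$, entries internal to $S$ or to $\bar S$ are unchanged, the $S\to\bar S$ block is identically zero, and each $i\in\bar S$, $j\in S$ contribution becomes $a_{ij}e^{-t}\to 0$. By the paper's standing assumption the undirected version of $G_A$ is connected, so at least one such $a_{ij}>0$; thus $f(t\cdot\mathbf{1}_S)$ strictly decreases in $t$ and its infimum is never attained, ruling out the existence of a minimizer and, by the gradient identity, of a balancing diagonal.

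For the hard direction, assume $G_A$ is strongly connected and show $f$ attains its infimum. Because $f$ is invariant under $\x\mapsto\x+c\mathbf{1}$, it suffices to work on the hyperplane $H=\{\x:\sum_i x_i=0\}$ and prove that $f|_H$ is coercive. Given $\x\in H$, let $M(\x)=\max_i x_i-\min_i x_i$ and let $i^*,j^*$ realize the max and min. By strong connectivity there is a directed path $i^*=v_0,v_1,\dots,v_\ell=j^*$ in $G_A$ of length $\ell\le n-1$ with $a_{v_{s-1}v_s}>0$ for each $s$. Telescoping gives $\sum_{s=1}^\ell(x_{v_{s-1}}-x_{v_s})=M(\x)$, so averaging yields some edge $(u,v)$ of the path with $x_u-x_v\ge M(\x)/n$, and hence
\[
f(\x)\;\ge\;a_{uv}\,e^{x_u-x_v}\;\ge\;a_{\min}\cdot e^{M(\x)/n}.
\]
On $H$, $\|\x\|\to\infty$ forces $M(\x)\to\infty$, so $f|_H$ is coercive and continuous and therefore attains its minimum. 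This coercivity step, in which a large spread of coordinates must be converted into a single exploding exponential term via a directed path from the extremal maximum index to the extremal minimum index, is the main obstacle and is exactly where strong connectivity is essential; without it the directed path may not exist, and the blow-up argument breaks down.
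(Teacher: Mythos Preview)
The paper does not actually prove this lemma; it is quoted from Kalantari et al.\ with the blanket remark ``The references contain the proofs,'' so there is no in-paper argument to compare against. Your proposal supplies a complete self-contained proof, and the strategy---identify balancing with $\nabla f=0$, use convexity to equate critical points with minimizers, then reduce existence of a minimizer to coercivity on the hyperplane $\sum_i x_i=0$ via a directed-path pigeonhole---is correct and is the standard one.

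One small tightening in the non-strongly-connected direction: as written you only exhibit that $t\mapsto f(t\,\mathbf{1}_S)$ is strictly decreasing along the ray from the origin, which by itself does not exclude a minimizer elsewhere in $\RR^n$. The fix is immediate: the same computation at an arbitrary base point $\x_0$ shows $t\mapsto f(\x_0+t\,\mathbf{1}_S)$ is strictly decreasing (within-block terms are constant in $t$, the $S\to\bar S$ block is zero, and each $\bar S\to S$ term is $a_{ij}e^{(x_0)_i-(x_0)_j-t}$), or equivalently the directional derivative $\nabla f(\x)\cdot\mathbf{1}_S=-\sum_{i\in\bar S,\,j\in S}a_{ij}e^{x_i-x_j}$ is strictly negative at every $\x$. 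With that one-line extension the argument is airtight; the coercivity step for the strongly connected case is correct as stated.
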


Notice that $f$ is a convex function and 
the gradient $\nabla f(\x)$ of $f$ at $\x$ is given by
$$
\frac{\partial f(\x)}{\partial x_i} = 
\sum_{j=1}^n a_{ij}\cdot e^{x_i - x_j} - \sum_{j=1}^n a_{ji}\cdot e^{x_j - x_i},
$$
the difference between the total weight of arcs leaving node $i$
and the total weights of arcs going into node $i$ in the graph of
$D A D^{-1}$ for $D = \diag(e^{x_1},e^{x_2},\dots,e^{x_n})$.
If $D A D^{-1}$ is balanced then 
the arc weights $a_{ij}\cdot e^{x_i - x_j}$ form a valid
circulation in the graph $G_A$, since the gradient has to
be $0$. Some properties of $f$ are given in the following
lemma.
\begin{lemma}[Lemmas 2.1 and 2.2 in Ostrovsky et al.~\cite{ORY16}]\label{lemma:opt lower}
If $\x'$ is derived from $\x$ by balancing index $i$ of a
matrix $B = (b_{ij})_{n\times n}$, then 
$f(\x) - f(\x') = (\sqrt{\|b_{.,i}\|_1}-\sqrt{\|b_{i,.}\|_1})^2$.
Also, for all $\x\in\mathbb{R}^n$, 
$f(\x)-f(\x^*) \le \frac{n}{2}\cdot\|\nabla f(\x)\|_1$.
\end{lemma}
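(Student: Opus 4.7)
For the first claim, direct computation. Writing $b_{k\ell}=a_{k\ell}e^{x_k-x_\ell}$ so that $f(\x)=\sum_{k,\ell}b_{k\ell}$, the balancing step at index $i$ replaces $x_i$ with $x_i+\tau$, where $\tau=\tfrac12\ln(\|b_{.,i}\|_1/\|b_{i,.}\|_1)$; this leaves every $b_{k\ell}$ with $k,\ell\neq i$ unchanged, multiplies each row-$i$ entry by $e^\tau$, and multiplies each column-$i$ entry by $e^{-\tau}$. A short computation shows the new row-$i$ and column-$i$ sums both equal $\sqrt{\|b_{.,i}\|_1\|b_{i,.}\|_1}$, so the drop in $f$ is $(\|b_{.,i}\|_1+\|b_{i,.}\|_1)-2\sqrt{\|b_{.,i}\|_1\|b_{i,.}\|_1}=(\sqrt{\|b_{.,i}\|_1}-\sqrt{\|b_{i,.}\|_1})^2$.

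For the second claim, I would start from the first-order convexity inequality $f(\x)-f(\x^*)\le\langle\nabla f(\x),\x-\x^*\rangle$, which is shift-invariant in $\x^*$ because $\sum_i\partial_if(\x)=0$. A naive H\"older bound, $\tfrac12\bigl(\max_i(x_i-x_i^*)-\min_i(x_i-x_i^*)\bigr)\|\nabla f(\x)\|_1$, is insufficient: the span of $\x-\x^*$ can be arbitrarily larger than $n$ (for instance, when $A$ is a long directed cycle and $\x$ varies linearly around it, the span is $\Theta(nL)$ while $f(\x)-f(\x^*)$ only grows like $(n-1)e^{L}$ and $\|\nabla f(\x)\|_1$ like $2e^{L}$). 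Instead, I would use a flow decomposition of $B=(b_{ij})$ in $G_A$ as a non-negative sum of simple directed paths from $P=\{i:\partial_if(\x)>0\}$ to $N=\{i:\partial_if(\x)<0\}$, plus a residual circulation $R$. The total path mass equals exactly $\tfrac12\|\nabla f(\x)\|_1$ (the net surplus that must be shipped out of $P$), each path lies in $G_A$ and so has length at most $n-1$, and therefore the edges of these paths contribute at most $\tfrac{n-1}{2}\|\nabla f(\x)\|_1$ to $f(\x)$. The residual $R\ge 0$ is balanced and supported on $G_A$, with cycle-products $\prod_{(i,j)\in c}R_{ij}\le\prod_{(i,j)\in c}b_{ij}=\prod_{(i,j)\in c}a_{ij}$ (from Lemma~\ref{lm: bal-opt}); an AM--GM / LP-duality argument then shows $\sum_{i,j}R_{ij}\le f(\x^*)$. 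Combining these two estimates yields $f(\x)-f(\x^*)\le\tfrac{n-1}{2}\|\nabla f(\x)\|_1<\tfrac{n}{2}\|\nabla f(\x)\|_1$.

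The main technical hurdle will be justifying $\sum_{i,j}R_{ij}\le f(\x^*)$: the residual circulation is not itself of the form $D'A(D')^{-1}$, so one cannot compare it to $B^*=D^*A(D^*)^{-1}$ entry by entry. The statement amounts to claiming that among all non-negative balanced matrices supported on $G_A$ with cycle-products bounded by $\prod a_{ij}$, the maximum-weight one is exactly $B^*$. I would prove this by a logarithmic change of variables $R_{ij}=a_{ij}e^{y_i-y_j-\lambda_{ij}}$ with $\lambda_{ij}\ge 0$, which turns the cycle-product constraint into the linear constraint $\sum_{(i,j)\in c}\lambda_{ij}\ge 0$, the balance condition into linear equalities in $y$, and the objective into $\sum_{i,j}a_{ij}e^{y_i-y_j-\lambda_{ij}}$, which is pointwise dominated by $f(\y)\ge f(\x^*)$ at $\lambda=0$; an LP-duality / monotonicity argument then closes the loop.
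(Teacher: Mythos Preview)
The paper does not actually prove this lemma; it is quoted from~\cite{ORY16} with the remark ``The references contain the proofs.'' So there is no in-paper argument to compare against, and I will simply assess your proposal on its own terms.

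Your computation for the first claim is correct and is the standard one.

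For the second claim, your flow-decomposition strategy is the right idea and gets you to the key inequality $\sum_{i,j}R_{ij}\le f(\x^*)$ for the residual circulation $R\le B$. But the justification you sketch for this last step does not go through. First, after your change of variables $R_{ij}=a_{ij}e^{y_i-y_j-\lambda_{ij}}$, the balance condition $\sum_j R_{ij}=\sum_j R_{ji}$ is a sum of exponentials in $y$ and is \emph{not} linear. Second, and more importantly, the bound you obtain is $\sum_{i,j}R_{ij}\le f(\y)$; since $f(\y)\ge f(\x^*)$, this points the wrong way, and no ``monotonicity'' fixes it.

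The step can be closed by convex (Fenchel) duality, which is presumably what your ``LP-duality'' remark is reaching for. The dual of $\min_{\x}\sum_{ij}a_{ij}e^{x_i-x_j}$ is
\[
f(\x^*)\;=\;\max\Bigl\{\sum_{ij}R_{ij}\bigl(1+\ln(a_{ij}/R_{ij})\bigr):\ R\ge 0,\ R\text{ a circulation on }G_A\Bigr\},
\]
obtained by writing the primal as $\min_{\x}g(M\x)$ with $g(u)=\sum_e a_e e^{u_e}$ and $M$ the edge--node incidence matrix, and computing $g^*$. Plugging your residual $R$ into the right-hand side gives
\[
f(\x^*)\;\ge\;\sum_{ij}R_{ij}\;+\;\sum_{ij}R_{ij}\ln\!\frac{a_{ij}}{R_{ij}}.
\]
Now use $R_{ij}\le b_{ij}=a_{ij}e^{x_i-x_j}$, so $\ln(a_{ij}/R_{ij})\ge x_j-x_i$, and observe that $\sum_{ij}R_{ij}(x_j-x_i)=0$ because $R$ is a circulation. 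Hence $\sum_{ij}R_{ij}\le f(\x^*)$, and together with your path bound $\sum_{ij}P_{ij}\le\frac{n-1}{2}\|\nabla f(\x)\|_1$ this yields $f(\x)-f(\x^*)\le\frac{n-1}{2}\|\nabla f(\x)\|_1<\frac{n}{2}\|\nabla f(\x)\|_1$.
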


We also need the following absolute bounds on the arc weights.
\begin{lemma}[Lemma 3.2 in Ostrovsky et al.~\cite{ORY16}]\label{lm: bounds on weight}
Suppose that a matrix $B$ is derived from a matrix $A$ through
a sequence of balancing operations. Then, for every arc $(i,j)$
of $G_B$, $\left(\frac{a_{\min}}{\sum_{ij} a_{ij}}\right)^n\cdot
\sum_{ij} a_{ij}\le b_{ij}\le\sum_{ij} a_{ij}$. (Notice that the arcs
of $G_B$ are identical to the arcs of $G_A$.)
\end{lemma}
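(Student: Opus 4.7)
The plan is to use the standard representation $B = DAD^{-1}$ with $D = \diag(e^{x_1}, \ldots, e^{x_n})$, so that every entry has the explicit form $b_{ij} = a_{ij}\cdot e^{x_i - x_j}$. In particular, the support of $B$ coincides with the support of $A$, which justifies the parenthetical remark in the lemma and lets me freely use the strong connectivity of $G_A$.

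For the upper bound, I would first argue that $f(\x)$ is monotonically non-increasing under balancing operations. This is immediate from Lemma~\ref{lemma:opt lower}, which says $f(\x) - f(\x') = (\sqrt{\|b_{.,i}\|_1} - \sqrt{\|b_{i,.}\|_1})^2 \ge 0$. Since $\x^{(1)} = \mathbf{0}$, we get $f(\x) \le f(\mathbf{0}) = \sum_{ij} a_{ij}$. Because every term in the sum defining $f(\x)$ is non-negative, each individual summand $b_{ij} = a_{ij}\cdot e^{x_i - x_j}$ is bounded by $f(\x)$, giving $b_{ij} \le \sum_{ij} a_{ij}$.

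For the lower bound I will exploit strong connectivity. Fix an arc $(i,j)$ of $G_B = G_A$. Pick a directed path $j = k_0, k_1, \ldots, k_m = i$ in $G_A$ with $m \le n-1$. Multiplying the $B$-weights along this path gives
\[
\prod_{\ell=0}^{m-1} b_{k_\ell k_{\ell+1}} \;=\; \Bigl(\prod_{\ell=0}^{m-1} a_{k_\ell k_{\ell+1}}\Bigr)\cdot e^{\sum_\ell (x_{k_\ell} - x_{k_{\ell+1}})} \;=\; \Bigl(\prod_{\ell=0}^{m-1} a_{k_\ell k_{\ell+1}}\Bigr)\cdot e^{x_j - x_i},
\]
since the exponents telescope. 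Now I apply the upper bound $b_{k_\ell k_{\ell+1}} \le \sum_{ij} a_{ij}$ just established, and the elementary lower bound $a_{k_\ell k_{\ell+1}} \ge a_{\min}$ (each edge in the path is in $G_A$ and therefore has weight at least $a_{\min}$). Solving for the exponential,
\[
e^{x_j - x_i} \;\le\; \Bigl(\frac{\sum_{ij} a_{ij}}{a_{\min}}\Bigr)^{m} \;\le\; \Bigl(\frac{\sum_{ij} a_{ij}}{a_{\min}}\Bigr)^{n-1},
\]
where the last step uses $\sum_{ij} a_{ij} \ge a_{\min}$ so the base is at least $1$ and increasing the exponent only weakens the bound.

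Finally, I combine this with $b_{ij} = a_{ij}\cdot e^{x_i - x_j} \ge a_{\min}\cdot e^{x_i - x_j}$ to obtain
\[
b_{ij} \;\ge\; a_{\min}\cdot\Bigl(\frac{a_{\min}}{\sum_{ij} a_{ij}}\Bigr)^{n-1} \;=\; \frac{a_{\min}^n}{(\sum_{ij} a_{ij})^{n-1}} \;=\; \Bigl(\frac{a_{\min}}{\sum_{ij} a_{ij}}\Bigr)^n\cdot \sum_{ij} a_{ij},
\]
which is exactly the claimed lower bound. I expect no serious obstacles: the only subtle point is noticing the telescoping cancellation along a strongly connected path (which is why the dependence on $n$ appears), and that the lower bound must be chained off the upper bound, so the two halves of the lemma have to be proved in the right order.
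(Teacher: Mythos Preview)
The paper does not give its own proof of this lemma; it is quoted verbatim from~\cite{ORY16} with the remark ``The references contain the proofs.'' So there is no in-paper argument to compare against. Your proof is correct and is the standard one: the upper bound is immediate from monotonicity of $f$ under balancing (Lemma~\ref{lemma:opt lower}), and the lower bound follows by telescoping $e^{x_j-x_i}$ along a directed $j\to i$ path and feeding the upper bound back in.

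One small point worth making explicit: you invoke strong connectivity of $G_A$ to guarantee a directed path from $j$ back to $i$, but the lemma statement itself does not assume this, and the paper's only standing graph hypothesis is that the \emph{undirected} version of $G_A$ is connected. Strong connectivity is, however, equivalent to balanceability by Lemma~\ref{lm: bal-opt}, and the entire paper tacitly works under that hypothesis (otherwise the balancing problem is vacuous). It would be cleaner to state this dependence openly rather than to slide from ``the support of $B$ coincides with the support of $A$'' to ``lets me freely use the strong connectivity of $G_A$,'' since the former does not by itself imply the latter.
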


Finally, we prove the following global condition on indices being
$\eps$-balanced.
\begin{lemma}\label{lemma:weight}
Consider a matrix $B = D A D^{-1} = (b_{ij})_{n\times n}$, 
where $D = \diag(e^{x_1},e^{x_2},\dots,e^{x_n})$,
that was derived from $A$ by a sequence of zero or more
balancing operations.
Let $\eps\in(0,1/2]$, and put $\eps'=\frac{\eps^2}{64n^4}$. Suppose
that $\|\nabla f_A(\vec{0})\|_1 \le \eps' \cdot f_A(\vec{0})$. 
Then, for every $i\in[n]$ we have the following implication.
If $\|b_{.,i}\|_1 + \|b_{i,.}\|_1 \ge \frac{1}{8n^3}\cdot f_A(\x)$, 
then index $i$ is $\eps$-balanced in $B$.
\end{lemma}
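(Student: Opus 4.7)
The plan is to combine the monotonicity of $f$ under balancing with Lemma~\ref{lemma:opt lower} to bound, uniformly over $i$, the ``available drop'' of $f$ from a single hypothetical balancing of index $i$ of $B$, and then to argue that such a small available drop is incompatible with a high-weight index being badly imbalanced.

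For the first step, I note that balancing only decreases $f$, so $f_A(\x) \le f_A(\vec{0})$. The second part of Lemma~\ref{lemma:opt lower} applied at $\vec{0}$ gives $f_A(\vec{0}) - f_A(\x^*) \le (n/2)\|\nabla f_A(\vec{0})\|_1 \le (n\eps'/2)\,f_A(\vec{0})$, so that $f_A(\vec{0})$ and $f := f_A(\x)$ agree up to a negligible multiplicative factor (since $n\eps' \le 1/256$ for $\eps \le 1/2$). Next, for any index $i$, consider conceptually balancing $i$ in $B$: by the first part of Lemma~\ref{lemma:opt lower} this decreases $f_B$ by exactly $(\sqrt{C_i}-\sqrt{R_i})^2$, where $C_i = \|b_{.,i}\|_1$ and $R_i = \|b_{i,.}\|_1$. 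Since balancing cannot push $f_B$ below its global minimum, which equals $f_A(\x^*)$, this drop is at most $f_A(\x) - f_A(\x^*) \le (n\eps'/2)\,f_A(\vec{0})$. Combining with the first observation and plugging in $\eps' = \eps^2/(64 n^4)$ yields $(\sqrt{C_i}-\sqrt{R_i})^2 \le \eps^2 f/(127 n^3)$.

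To conclude, I argue by contradiction. Suppose some index $i$ satisfies $C_i + R_i \ge f/(8n^3)$ yet is not $\eps$-balanced; WLOG $C_i > (1+\eps)R_i$, so $\sqrt{C_i}-\sqrt{R_i} \ge (\sqrt{1+\eps}-1)\sqrt{R_i}$. The elementary inequality $(\sqrt{1+\eps}-1)^2 \ge \eps^2/5.3$, valid for $\eps \in (0,1/2]$, then forces $R_i < f/(23n^3)$; substituting into $\sqrt{C_i} \le \sqrt{R_i} + \sqrt{\eps^2 f/(127 n^3)}$ and squaring gives $C_i < f/(16n^3)$. Summing produces $C_i + R_i < f/(16n^3) + f/(23n^3) < f/(8n^3)$, the desired contradiction. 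The main obstacle is purely constant tracking: the choice $\eps' = \eps^2/(64 n^4)$ is essentially calibrated so that $C_i + R_i$ lands just below the threshold $f/(8n^3)$, and looser bounds on $\sqrt{1+\eps}-1$ or on $f_A(\vec{0})/f$ break the chain. The argument has no real conceptual subtlety beyond recognizing that the single-step drop formula of Lemma~\ref{lemma:opt lower} combined with the near-optimality of $\x$ furnishes a uniform per-index certificate.
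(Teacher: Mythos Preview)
Your argument is correct and follows essentially the same route as the paper: both bound the one-step drop $(\sqrt{C_i}-\sqrt{R_i})^2$ from above by $\tfrac{n\eps'}{2}f_A(\vec{0})=\tfrac{\eps^2}{128n^3}f_A(\vec{0})$ via Lemma~\ref{lemma:opt lower}, bound it from below using the assumed imbalance of index $i$, and combine to force $C_i+R_i$ below the threshold. The paper's algebra is slightly more direct---it shows $(\sqrt{C_i}-\sqrt{R_i})^2 > \tfrac{\eps^2}{16}(C_i+R_i)$ in one stroke and divides, rather than bounding $R_i$ and $C_i$ separately---and it stops at the threshold $\tfrac{1}{8n^3}f_A(\vec{0})$, whereas you take the extra care to convert to $\tfrac{1}{8n^3}f_A(\x)$ via $f_A(\vec{0})\le f/(1-n\eps'/2)$. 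One small arithmetic slip: your intermediate claim $C_i < f/(16n^3)$ is a hair too strong (the bound your inequalities actually give is about $f/(15.6\,n^3)$), but this is harmless since $C_i+R_i \lesssim f/(15.6\,n^3)+f/(23n^3) < f/(8n^3)$ still holds.
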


\begin{proof}
We will show the contrapositive claim that if a node is 
not $\eps$-balanced then it must have low weight
(both with respect to $B$).
Let $i$ be an index that is not $\eps$-balanced in $B$.
Without loss of generality we may assume that the in-weight
is larger than the out-weight, so ${\|b_{.,i}\|_1}/{\|b_{i,.}\|_1} > 1+\eps$.
Consider what would happen if we balance index $i$ in $B$,
yielding a vector $\x'$ that differs from $\x$ only in the $i$-th
coordinate.
\begin{eqnarray}
\nonumber
            f_A(\x)-f_A(\x') 
& = & \left(\sqrt{\|b_{.,i}\|_1} - \sqrt{\|b_{i,.}\|_1}\right)^2 \\
\nonumber
& > & \|b_{.,i}\|_1\cdot\left(1-\sqrt{\frac{1}{1+\eps}}\right)^2 \\
\label{eq: f lower}
& > & \frac{\eps^2}{16}\cdot \left(\|b_{.,i}\|_1 + \|b_{i,.}\|_1\right),
\end{eqnarray}
where the equation follows from Lemma~\ref{lemma:opt lower}
and the last inequality uses the fact that $\eps\le\frac 1 2$.
%
%
%
%
On the other hand, we have
\begin{eqnarray}
\nonumber
            f_A(\x) - f_A(\x') 
& \le &  f_A(\vec{0}) - f(\x^*) \\
\nonumber
& \le & \frac{n}{2}\cdot\|\nabla f_A(\vec{0})\|_1 \\
\nonumber
& \le & \frac{n}{2}\cdot\eps'\cdot f_A(\vec{0}) \\
\label{eq: f upper}
& = & \frac{\eps^2}{128n^3}\cdot f_A(\vec{0}). 
\end{eqnarray}
where the first inequality follows from the
the fact that every balancing step decreases
$f_A$, the second inequality follows from
Lemma~\ref{lemma:opt lower}, the third
inequality follows from the assumption on
$f_A(\vec{0})$, and the last equation follows
from the choice of $\eps'$. 
Combining the bounds on $f_A(\x) - f_A(\x')$ in
Equations~\eqref{eq: f lower} and~\eqref{eq: f upper}
gives
$$
\|b_{.,i}\|_1 + \|b_{i,.}\|_1 < \frac{1}{8n^3}\cdot f_A(\vec{0}),
$$
and this completes the proof.
\end{proof}

\section{Strict Balancing}\label{section:strict balancing}

\begin{algorithm}[t!]
\caption{StrictBalance($A$, $\eps$)}\label{alg:strict}
\begin{algorithmic}[1]
\Input{Matrix $A \in\mathbb{R}^{n\times n}, \eps$}
\Output{A strictly $\eps$-balanced matrix}\vspace{2mm}
\State $\mathcal{B}_1=\varnothing$, $\tau_1 = 0$, $s = 1$, 
          $\eps'=\eps^2/64n^4$, $\xt{1}=(0,\ldots,0)$, $t =1$\vspace{2mm}
\While{$\mathcal{B}_s\neq [n]$ and there is $i\in[n]$ that is not $\eps$-balanced}\vspace{2mm}
\State Define $\fbk{s}:\mathbb{R}^n\rightarrow \mathbb{R}$, 
               $\fbk{s}(\x) = \displaystyle \sum_{i,j: i\notin{\cal B}_s \text{ or } j\notin{\cal B}_s}
                       a_{ij}e^{x_i-x_j}$ \vspace{2mm}
\While{$\displaystyle\frac{\|\nabla \fbk{s}(\xt{t})\|_1}{\fbk{s}(\xt{t})} > \eps'$}
      \State Pick $i=\argmax_{i\notin\mathcal{B}_s}\left\{\left(\sqrt{\|a_{.,i}^{(t)}\|_1} - \sqrt{\|a_{i,.}^{(t)}\|_1}\right)^2\right\}$
      \State Balance $i$th node: $\xt{t+1} = \xt{t} + \alpha_t\mathbf{e}_i$, where $\alpha_t = \ln\sqrt{\|a_{.,i}^{(t)}\|_1/{\|a_{i,.}^{(t)}\|_1}}$\vspace{2mm}
      \State $t \leftarrow t +1$\vspace{2mm}
      \If {$s > 1$ and $\|a_{.,i}^{(t)}\|_1 + \|a_{i,.}^{(t)}\|_1 < \tau_s$ for some $i\in\B_s \setminus \B_{s-1}$}\vspace{2mm}
          \State $\mathcal{B}_s = \mathcal{B}_s\setminus \{i\notin\mathcal{B}_{s-1}: \|a_{.,i}^{(t)}\|_1 + \|a_{i,.}^{(t)}\|_1 < \tau_s \}$\vspace{2mm}
          \State Redefine $\fbk{s}:\mathbb{R}^n\rightarrow \mathbb{R}$, $\fbk{s}(\x) = \displaystyle\sum_{i,j: i\notin\mathcal{B}_s\text{ or }j\notin\mathcal{B}_s}a_{ij}e^{x_i-x_j}$\vspace{2mm}
      \EndIf
 \EndWhile\vspace{2mm}
 \State $\tau_{s+1} = \displaystyle\frac{1}{4n^3}\fbk{s}(\xt{t})$\vspace{2mm}
 \State $\mathcal{B}_{s+1}=\mathcal{B}_s\cup\Big\{i:\|a_{.,i}^{(t)}\|_1 + \|a_{i,.}^{(t)}\|_1 \ge \tau_{s+1}\Big\}$\vspace{2mm}
 \State $s \leftarrow s + 1$\vspace{2mm}
\EndWhile
\State\textbf{return} the resulting matrix
\end{algorithmic}
\end{algorithm}

In this section we present a variant of Osborne's iteration
and prove that it converges in polynomial time to a strictly 
$\eps$-balanced matrix. The algorithm, a procedure named
StrictBalance, is defined in pseudocode labeled 
Algorithm~\ref{alg:strict} on page~\pageref{alg:strict}.
Lemma~\ref{lemma:weight} above motivates the main idea 
of contracting heavy nodes in step 14 of StrictBalance.

Our main theorem is
\begin{theorem}\label{thm:strict}
StrictBalance($A$, $\eps$) returns a strictly $\eps$-balanced matrix 
$B = D A D^{-1}$ after at most 
$$
O\left(\eps^{-2}n^{9}\log(wn/\eps)\log w/\log n\right)
$$
balancing steps, using $O\left(\eps^{-2}n^{10}\log(wn/\eps)\log w\right)$ 
arithmetic operations over $O(n\log w)$-bit numbers.
\end{theorem}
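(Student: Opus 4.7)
The plan is to split the analysis into three parts: (a) an iteration bound for the inner while loop between consecutive resets of $\fbk{s}$; (b) a bound on the number of phase transitions (line 14) plus reactivation events (lines 8--11); and (c) a correctness check that the output is strictly $\eps$-balanced.

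For (a), steps 5--7 implement greedy coordinate descent on $\fbk{s}$. Combining Lemma \ref{lemma:opt lower} with Cauchy--Schwarz and the bound $\sum_{i\notin\B_s}(\sqrt{\|a^{(t)}_{.,i}\|_1}+\sqrt{\|a^{(t)}_{i,.}\|_1})^2 \le 4\fbk{s}(\xt{t})$, I would show that one greedy step decreases $\fbk{s}$ by at least $\|\nabla\fbk{s}\|_1^2/(4n\fbk{s})$, which is a multiplicative drop of at least $1-(\eps')^2/(4n)$ per step as long as the inner-loop condition $\|\nabla\fbk{s}\|_1/\fbk{s} > \eps'$ holds. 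The number of iterations between consecutive resets of $\fbk{s}$ (initialization, a phase boundary, or a reactivation) is therefore polynomial in $n/\eps'$ times $\log R$, where $R$ is the start-to-end ratio of $\fbk{s}$ over that run. Summed over all runs, the total $\sum_s\log R_s$ is controlled by the overall log-range of $f_A$, which is $O(n\log w)$ by Lemma \ref{lm: bounds on weight}.

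For (b), at each phase transition the threshold $\tau_{s+1}=\fbk{s}/(4n^3)$ captures at least one index outside $\B_s$ (since some such index must carry at least a $1/n$ fraction of the total restricted weight), and consecutive thresholds drop by a factor $\Theta(n^3)$ in the absence of reactivations, giving only $O(\log(wn/\eps)/\log n)$ distinct scales. Bounding reactivations is the crux. The plan is to design a potential that pairs each reactivation event (an index $i\in\B_s\setminus\B_{s-1}$ whose total weight crosses $\tau_s$ from above) with a detectable additive drop in $\fbk{s}$, and to use Lemma \ref{lm: bounds on weight} to cap the aggregate drop. This bookkeeping should bound reactivations polynomially in $n$ and $\log w$.

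For (c), when the outer loop exits, each index contracted at phase transition $s+1$ was at that moment in the regime $\|\nabla\fbk{s}\|_1/\fbk{s}\le\eps'$ with weight $\ge\tau_{s+1}=\fbk{s}/(4n^3)$, so Lemma \ref{lemma:weight} (applied to the restricted problem) certifies it is $\eps$-balanced; the reactivation rule in lines 8--11 ensures no contracted index subsequently slips below its enrollment weight without being re-balanced. Combining the per-run bound with the phase-and-reactivation bound, and converting to arithmetic operations via Lemma \ref{lm: bounds on weight} (which controls the bit-size of all quantities to $O(n\log w)$), should then yield the claimed complexity. The main obstacle will be the reactivation analysis: without a tight potential argument, oscillating indices could in principle inflate the iteration count arbitrarily, and the key technical contribution is the coupling of each reactivation event to measurable global progress on $f_A$.
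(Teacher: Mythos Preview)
Your three-part outline matches the paper's structure, and your coordinate-descent analysis in (a) is essentially the paper's Lemma~\ref{lemma:phase}. But there are two genuine gaps, and both stem from missing a simple structural observation about reactivations.

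\textbf{Reactivations are trivially bounded, not ``the crux''.} During phase $s$, lines 8--9 only \emph{remove} nodes from $\B_s$; nothing ever re-inserts a node into $\B_s$ within the same phase. Hence $\B_s$ is monotonically shrinking during phase $s$, and the number of reactivation events in one phase is at most $|\B_{s,t_s}|\le n-1$. No potential argument is needed. This immediately gives at most $n$ intervals per phase, each of length $O(\eps^{-2}n^7\log(wn/\eps))$ by your part (a), for $O(\eps^{-2}n^8\log(wn/\eps))$ steps per phase. Your telescoping idea $\sum_s\log R_s=O(n\log w)$ is in fact shaky: at a reactivation $\fbk{s}$ \emph{increases} (you are adding arcs back), so the per-run ratios do not telescope against $f_A$.

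\textbf{You are missing the key weight invariant, and your phase-count and correctness arguments both depend on it.} The paper proves (Lemma~\ref{lemma:weight_drop}) that $\fbk{s,t}(\xt{t})\le (n-|\B_{s,t}|)\,\tau_s$ throughout phase $s$, by the simple observation that every node outside $\B_{s,t}$ (whether it was never in $\B_s$ or was just reactivated) has current weight $<\tau_s$. This invariant survives reactivations and immediately gives $\tau_{s+1}\le \tau_s/(4n^2)$ unconditionally, hence $O(n\log w/\log n)$ phases via Lemma~\ref{lm: bounds on weight}. Your phase-count sketch only claims the threshold drop ``in the absence of reactivations'', which is not enough.

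More importantly, your correctness argument in (c) is incomplete. The reactivation test in line 8 checks only $i\in\B_s\setminus\B_{s-1}$; nodes contracted in earlier phases (those in $\B_{s-1}$) are \emph{never} checked and \emph{never} reactivated. You need to argue separately that such a node cannot lose enough weight in later phases to become unbalanced. The paper does this (Lemma~\ref{lm: B_s balanced}) by using the weight invariant: the total weight any node can lose during phase $j>s$ is at most $\fbk{j}$'s initial value $\le n\tau_j$, and since $\tau_j$ decays geometrically, $\sum_{j>s} n\tau_j\le \tau_s/2$. Hence a node that entered $\B_s$ with weight $\ge\tau_s$ retains weight $\ge\tau_s/2=\fbk{s-1}(\xt{t_s})/(8n^3)$ forever, and Lemma~\ref{lemma:weight} applies at all later times. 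Without this geometric-tail argument, your claim that ``no contracted index subsequently slips below its enrollment weight'' is unjustified for indices in $\B_{s-1}$.
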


The proof of Theorem~\ref{thm:strict} uses a few arguments, given
in the following lemmas. A {\em phase} of StrictBalance is one iteration of
the outer while loop. Notice that in the beginning of this loop the
variable $s$ indexes the phase number (i.e., $s-1$ phases
were completed thus far). Also in the beginning of the inner while
loop the variable $t$ indexes the total iteration number 
from all phases (i.e., $t-1$ balancing operations from all
phases were completed thus far). 

We identify outer loop iteration $s$ with an interval
$[t_s,t_{s+1}) = \{t_s,t_s+1,\dots,t_{s+1}-1\}$ of the 
inner loop iterations executed during
phase $s$. We denote by ${\cal B}_{s,t}$ the value of ${\cal B}_s$
in the beginning of the inner while loop iteration number $t$
(dubbed time $t$). If $t\in[t_j,t_{j+1})$, then ${\cal B}_{s,t}$ is 
defined only for $s\le j$. We also use $\gbk{s,t}$ to denote the
graph that is obtained by contracting the nodes of set ${\cal B}_{s,t}$
in graph $G_A$. Also $\fbk{s,t}$ is the function corresponding to graph $\gbk{s,t}$
and $\fbk{s,t}(\xt{t})$ denotes the sum of weights of arcs of graph $\gbk{s,t}$ at
time $t$. If set $B_s$ is unchanged during an interval and there is
no confusion, we may use $\gbk{s}$ instead of $\gbk{s,t}$. Particularly
we use $\fbk{s}(\xt{t})$ instead of $\fbk{s,t}(\xt{t})$.
We refer to the quantity 
$\|a_{.,i}^{(t)}\|_1 + \|a_{i,.}^{(t)}\|_1$ as the {\em weight} of
node $i$ at time $t$.

\begin{lemma}\label{lm: B_s at time t}
For every phase $s \ge 1$, for every $t\ge t_{s+1}$, 
$\B_{s,t} = \B_{s,t_{s+1}}$.
\end{lemma}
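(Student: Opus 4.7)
The plan is to prove the lemma by direct inspection of Algorithm~\ref{alg:strict}, exhibiting it as an invariant of the pseudocode: no line can alter $\B_s$ after phase $s$ has ended. Recall that, by definition, during phase $s$ the outer-loop variable is held equal to $s$ and the inner-loop time counter $t$ ranges over $[t_s, t_{s+1})$.

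First I would enumerate every statement in the pseudocode at which a set $\B_{\ast}$ is written. The set $\B_s$ is created either at step 1 (if $s=1$) or at step 14 of phase $s-1$ (if $s>1$); in either case the creating write occurs no later than time $t_s$. The only other $\B$-set statements are step 9 and step 14 itself. Step 14 \emph{defines} the new set $\B_{s+1}$ as $\B_s$ together with the heavy-weight indices; it reads $\B_s$ but does not mutate it. Hence the one and only line capable of mutating $\B_s$ after time $t_s$ is step 9.

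Finally, I would observe that step 9 lives inside the inner while loop and refers to the current value of the outer-loop variable: during phase $s'$ this variable equals $s'$, so the assignment on step 9 only removes elements from $\B_{s'}$, never from $\B_s$ for any $s<s'$. Combined with the previous paragraph, this shows that $\B_s$ can be modified only at times $t\in[t_s,t_{s+1})$, whence $\B_{s,t}=\B_{s,t_{s+1}}$ for all $t\ge t_{s+1}$. There is no substantive obstacle in this proof; the statement is an algorithmic invariant rather than a quantitative bound, and its purpose in the paper is presumably to license the abbreviated notation $\gbk{s}$ (in place of $\gbk{s,t}$) once phase $s$ has terminated.
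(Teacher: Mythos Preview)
Your proposal is correct and follows essentially the same approach as the paper: both argue by inspection of the pseudocode that once the outer-loop variable has advanced past $s$, the only mutating line (line~9) touches $\B_{s'}$ for the current phase $s'>s$, never $\B_s$. Your enumeration of the writing statements is slightly more systematic than the paper's terse version, but the argument is the same.
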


\begin{proof}
The claim follows easily from the fact that 
any iteration $t\ge t_{s+1}$ belongs to a phase 
$s' > s$, so 
$\B_{s,t_{s+1}}\cap (\B_{s',t}\setminus \B_{s'-1,t}) = \emptyset$,
and by line 8 and 9 of StrictBalance none of the nodes in
$\B_{s,t_{s+1}}$ will be removed.
\end{proof} 

\begin{lemma}\label{lemma:weight_drop}
For all $s > 1$, for all $t\in [t_s,t_{s+1})$, 
$\fbk{s,t}(\xt{t}) \le \left(n -  \left|{\B}_{s,t}\right|\right)\cdot\tau_{s}$.
\end{lemma}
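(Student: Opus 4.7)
My plan is to prove the bound by induction on $t$ over the interval $[t_s,t_{s+1})$. For the base case $t = t_s$, line 15 at the end of phase $s-1$ defines $\B_{s,t_s}$ to contain $\B_{s-1,t_s}$ together with exactly those indices whose weight at time $t_s$ is at least $\tau_s$. Consequently, every $i\notin\B_{s,t_s}$ satisfies $\|a_{.,i}^{(t_s)}\|_1 + \|a_{i,.}^{(t_s)}\|_1 < \tau_s$. Every arc contributing to $\fbk{s,t_s}(\xt{t_s})$ has at least one endpoint outside $\B_{s,t_s}$, so summing row and column norms over $[n]\setminus\B_{s,t_s}$ overcounts $\fbk{s,t_s}(\xt{t_s})$, yielding
$\fbk{s,t_s}(\xt{t_s}) \le \sum_{i \notin \B_{s,t_s}}\bigl(\|a_{.,i}^{(t_s)}\|_1 + \|a_{i,.}^{(t_s)}\|_1\bigr) < (n-|\B_{s,t_s}|)\cdot\tau_s$.

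For the inductive step, I consider the passage from iteration $t$ to iteration $t+1$ (with $t+1 < t_{s+1}$). The algorithm first executes a balancing step on some $j \notin \B_{s,t}$ (line 6), and then possibly strips a subset $R \subseteq \B_{s,t} \setminus \B_{s-1}$ from $\B_s$ via lines 8--9, so that $\B_{s,t+1} = \B_{s,t} \setminus R$. Since $j \notin \B_{s,t}$, every arc incident to $j$ is counted in $\fbk{s,t}$, so by Lemma~\ref{lemma:opt lower} the balance operation decreases $\fbk{s}$ (relative to the fixed set $\B_{s,t}$) by exactly $(\sqrt{\|a_{.,j}^{(t)}\|_1}-\sqrt{\|a_{j,.}^{(t)}\|_1})^2 \ge 0$. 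Subsequently, switching the contracted set from $\B_{s,t}$ to $\B_{s,t+1}$ adds to $\fbk{s}$ precisely those arcs whose two endpoints both lie in $\B_{s,t}$ while at least one endpoint lies in $R$; a crude overcount bounds the total added weight by $\sum_{i\in R}(\|a_{.,i}^{(t+1)}\|_1 + \|a_{i,.}^{(t+1)}\|_1)$, and each summand is strictly less than $\tau_s$ by the removal criterion of line 8. Combining both effects, $\fbk{s,t+1}(\xt{t+1}) \le \fbk{s,t}(\xt{t}) + |R|\,\tau_s \le (n - |\B_{s,t}|)\tau_s + |R|\,\tau_s = (n-|\B_{s,t+1}|)\,\tau_s$, closing the induction.

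The main conceptual obstacle is the removal step: it \emph{increases} $\fbk{s}$ and could in principle destroy the invariant. The argument survives only because $\tau_s$ plays two simultaneous roles---it is the threshold governing removal (so each removed node imports less than $\tau_s$ worth of arc weight into $\fbk{s}$), and it is the coefficient of $(n-|\B_{s,t}|)$ on the right-hand side (so the drop of $|R|$ in $n-|\B_{s,t}|$ exactly accommodates the newly imported weight). The rest is careful bookkeeping across the ``before balance'', ``after balance, before removal'', and ``after removal'' configurations of iteration $t$, together with the observation that overcounting weight between nodes in $R$ only loosens the bound.
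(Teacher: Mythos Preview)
Your proof is correct and follows essentially the same approach as the paper's. The paper also argues by induction, using the same two key facts: (i) at the start of phase $s$ every node outside $\B_s$ has weight below $\tau_s$, which gives the base case via the same overcounting of arcs with an endpoint outside $\B_s$; and (ii) balancing can only decrease $\fbk{s}$ while each node ejected from $\B_s$ has weight below $\tau_s$, so any removal of $k$ nodes increases $\fbk{s}$ by at most $k\tau_s$, matching the growth of the right-hand side. The only cosmetic difference is that the paper inducts over the finitely many time steps at which $\B_s$ changes (handling the constant-$\B_s$ intervals in one line by monotonicity), whereas you induct directly on $t$; the content is the same.
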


\begin{proof}
Let $t_s = t_{s,1} < t_{s,2} < t_{s,3} < \cdots < t_{s,\ell_s}$ denote
the time steps before which $\B_s$ changes during phase $s$.
For simplicity, we abuse notation and use $\B_{s,j}$ instead of
$\B_{s,t_{s,j}}$. Clearly 
$\B_{s,1}\supseteq \B_{s,2}\ldots\supseteq\B_{s,\ell_s}$, because 
we only remove nodes from $\B_{s}$ once it is set. Fix $s > 1$. 
We prove this lemma by induction on $r\in\{1,2,\dots,\ell_s\}$.
For the basis, let $r = 1$.
Clearly, by the way the algorithm sets $\B_s$ before time $t_{s,1}$,
all nodes with weight $\ge \tau_s$ are in $\B_s$, and therefore
every node $i\not\in\B_s$ has weight at most $\tau_s$, so the
lemma follows. Now, assume that the lemma is true for every
$t \le t_{s,r}$, we show that the lemma holds for every
$t \le t_{s,r+1}$. If $t\in [t_{s,r}, t_{s,r+1})$, then $\B_{s,t} = \B_{s,t_{s,r}}$,
and we have:
$$
\fbk{s}(\xt{t}) \le
\fbk{s}(\xt{t_{s,r}})\le\left(n -  \left|{\B}_{s,t_{s,r}}\right|\right)\cdot\tau_{s} = 
\left(n -  \left|{\B}_{s,t}\right|\right)\cdot\tau_{s}.
$$
The first inequality holds because balancing operations from time $t_{s,r}$ to
time $t$ only reduce the value of $\fbk{s}$, and the second inequality holds
by the induction hypothesis.

Just before iteration $t = t_{s,r+1}$, the set $\B_s$ changes,
and one or more nodes are removed from it.
However, every removed node has weight at most $\tau_s$, and
its removal does not change the weights of the other nodes in
$[n]\setminus \B_s$. Therefore, if $k$ nodes are removed from
$\B_s$,
$$
\fbk{s}(\xt{t_{s,r+1}})\le \left(n -  \left|{\B}_{s,t_{s,r}}\right|\right)\cdot\tau_{s}
+ k\cdot \tau_{s} = \left(n -  \left|{\B}_{s,t_{s,r+1}}\right|\right)\cdot\tau_{s}.
$$
This completes the proof.
\end{proof}

\begin{corollary}\label{cor: tau}
For all $s > 1$, $\fbk{s}(\xt{t_{s+1}}) \le 
\frac{1}{4n^2}\cdot \fbk{s-1}(\xt{t_s})$. If $s > 2$,
then $\tau_{s} \le \frac{\tau_{s-1}}{4n^2}$.
\end{corollary}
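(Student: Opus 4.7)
Both inequalities reduce to Lemma~\ref{lemma:weight_drop} together with line~14 of StrictBalance, which defines $\tau_s = \fbk{s-1}(\xt{t_s})/(4n^3)$ for every $s>1$. Substituting this formula into the first asserted inequality makes it equivalent to the simpler statement
\[
\fbk{s}(\xt{t_{s+1}}) \le n\cdot\tau_s.
\]

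To establish this bound I would invoke Lemma~\ref{lemma:weight_drop}, which already yields $\fbk{s,t}(\xt{t}) \le (n-|\B_{s,t}|)\tau_s \le n\tau_s$ throughout the half-open interval $[t_s,t_{s+1})$. The one nontrivial step is extending it to the closed endpoint $t=t_{s+1}$. I would do this by applying the lemma at $t=t_{s+1}-1$ and then replaying the induction step of its proof across the final balancing operation and the possible set update on lines~8--10: either $\B_s$ is unchanged (and the balancing only decreases $\fbk{s}$), or at most $k$ nodes leave $\B_s$, each of weight below $\tau_s$, so the resulting increase in $\fbk{s}$ is at most $k\tau_s$, exactly offset by the decrease in $|\B_{s,\cdot}|$ from $|\B_{s,t_{s+1}-1}|$ to $|\B_{s,t_{s+1}}|$. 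Lemma~\ref{lm: B_s at time t} guarantees that $\B_s$ stays put after $t_{s+1}$, so $\fbk{s}(\xt{t_{s+1}})$ is unambiguous. The degenerate case $t_{s+1}=t_s$ (a phase performing no inner iteration) is handled directly by the basis of Lemma~\ref{lemma:weight_drop}'s induction, which already covers $t=t_s$.

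For the second inequality, assume $s>2$ and apply the first inequality with $s$ replaced by $s-1$ to obtain $\fbk{s-1}(\xt{t_s}) \le \fbk{s-2}(\xt{t_{s-1}})/(4n^2)$. Dividing by $4n^3$ and using the line-14 formula on both sides converts this to $\tau_s \le \tau_{s-1}/(4n^2)$, which is the desired bound. The only real obstacle is the careful endpoint treatment at $t=t_{s+1}$, and this is resolved by reusing the induction step of Lemma~\ref{lemma:weight_drop}'s proof; everything else is substitution of the line-14 formula.
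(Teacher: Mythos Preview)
Your plan is correct and follows essentially the same route as the paper: reduce the first inequality to $\fbk{s}(\xt{t_{s+1}})\le n\tau_s$ via Lemma~\ref{lemma:weight_drop}, then combine with the definition of $\tau_s$ (this is line~13, not line~14, of StrictBalance); derive the second inequality by bounding $\fbk{s-1}(\xt{t_s})\le n\tau_{s-1}$ in the same way. The paper simply cites Lemma~\ref{lemma:weight_drop} at the closed endpoint $t_{s+1}$ without comment, whereas you correctly flag that the lemma is stated only on $[t_s,t_{s+1})$ and supply the missing step by replaying its induction across the final balancing and possible line~8--10 update; this is the right fix and is exactly how the lemma's own induction step already argues at the intermediate change points $t_{s,r+1}$.
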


\begin{proof}
Notice that
$$
\fbk{s}(\xt{t_{s+1}}) \le
n\cdot\tau_s = \frac{1}{4n^2}\cdot \fbk{s-1}(\xt{t_s}),
$$
where the inequality follows from Lemma~\ref{lemma:weight_drop},
and the equation follows from line 13 of StrictBalance. This proves 
the first assertion. As for the second assertion, notice that if $s > 2$
then $s-1 > 1$, so using line 13 of StrictBalance and 
Lemma~\ref{lemma:weight_drop} again,
$$
\tau_s = \frac{1}{4n^3}\cdot \fbk{s-1}(\xt{t_s})\le
\frac{1}{4n^3}\cdot n\tau_{s-1} = \frac{1}{4n^2}\cdot \tau_{s-1},
$$
as stipulated.
\end{proof}

\begin{lemma}\label{lm: B_s balanced}
For every phase $s > 1$, for every $t \ge t_s$,
all the nodes in $\B_{s,t}$ have weight $\ge \tau_s/2$ and are $\eps$-balanced 
at time $t$.
\end{lemma}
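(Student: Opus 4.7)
The plan is to prove the claim by induction on $s$. For the base case $s=2$, the set $\B_1=\varnothing$, so $\fbk{1}=f_A$, and every node currently in $\B_{2,t}$ was inserted at time $t_2$ by line~14 with initial weight at least $\tau_2$; during phase~$2$ the guard at lines~8--9 prevents its weight from dropping below $\tau_2$, so the required bound $\ge \tau_s/2$ holds with room to spare. For the inductive step with $s>2$, I will decompose $\B_{s,t}=\B_{s-1,t}\cup(\B_{s,t}\setminus\B_{s-1})$. Nodes in $\B_{s-1,t}$ satisfy the claim by the induction hypothesis, and the weight bound upgrades to $\tau_s/2$ via Corollary~\ref{cor: tau}, which gives $\tau_{s-1}/2\ge 2n^2\tau_s\ge\tau_s/2$. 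Nodes in $\B_{s,t}\setminus\B_{s-1}$ are handled exactly as in the base case, with $f_A$ replaced by $\fbk{s-1}$.

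The core calculation is a direct adaptation of the proof of Lemma~\ref{lemma:weight} with $\fbk{s-1}$ in place of $f_A$. The exit condition of phase $s-1$ furnishes $\|\nabla\fbk{s-1}(\xt{t_s})\|_1\le\eps'\fbk{s-1}(\xt{t_s})=4n^3\eps'\tau_s$ (using line~13). Every balance step at any time $t\ge t_s$ updates $x_j$ for some $j\notin\B_{s'}\supseteq\B_{s-1}$, where $s'$ is the phase containing $t$; since every arc incident to such $j$ contributes to $\fbk{s-1}$, the step minimizes $\fbk{s-1}$ over $x_j$ and in particular decreases it. The analog of Lemma~\ref{lemma:opt lower} for $\fbk{s-1}$ then yields
\[
\fbk{s-1}(\xt{t})-\fbk{s-1}(\x^*)\;\le\;\fbk{s-1}(\xt{t_s})-\fbk{s-1}(\x^*)\;\le\;\tfrac{n}{2}\eps'\fbk{s-1}(\xt{t_s})\;=\;\tfrac{\eps^2}{32}\tau_s.
\]
If some $i\in\B_{s,t}\setminus\B_{s-1}$ with weight $w_i(t):=\|a_{.,i}^{(t)}\|_1+\|a_{i,.}^{(t)}\|_1\ge\tau_s/2$ were not $\eps$-balanced at time $t$, the estimate behind~\eqref{eq: f lower} would force the hypothetical one-step decrease of $\fbk{s-1}$ from balancing $i$ to strictly exceed $(\eps^2/16)(\tau_s/2)=\eps^2\tau_s/32$, contradicting the displayed bound. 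Thus $w_i(t)\ge\tau_s/2$ automatically implies $\eps$-balance, closing the inductive step on the range $t\in[t_s,t_{s+1})$ once the weight bound is secured.

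The main obstacle I expect is extending the weight bound $w_i(t)\ge\tau_s/2$ for $i\in\B_{s,t}\setminus\B_{s-1}$ to times $t\ge t_{s+1}$, when $i$ is no longer protected by lines~8--9 (during a later phase $s'$ that guard only inspects $\B_{s'}\setminus\B_{s'-1}$, whereas here $i\in\B_s\subseteq\B_{s'-1}$) and is also never itself re-balanced. Its weight can still drift as a side-effect of balancing its neighbors, so this drift must be controlled. My plan is to exploit the fact established in the previous paragraph that the total remaining decrease of $\fbk{s-1}$ after $t_s$ is bounded by $\eps^2\tau_s/32$: a single balance step of a neighbor $j$, with balance amount $\alpha_t=\tfrac{1}{2}\ln(\|a_{.,j}^{(t)}\|_1/\|a_{j,.}^{(t)}\|_1)$, changes $w_i$ by $a_{ji}^{(t)}(e^{\alpha_t}-1)+a_{ij}^{(t)}(e^{-\alpha_t}-1)$, and a Cauchy--Schwarz-style amortization across the sequence of such steps should tie the total drift to $\sqrt{\eps^2\tau_s/32}$ times matrix-dependent factors controllable via Lemma~\ref{lm: bounds on weight}. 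Summing these contributions should keep the aggregate drift safely below $\tau_s/2$; this scale-shifting control is precisely the technical difficulty highlighted in the introduction's contribution discussion.
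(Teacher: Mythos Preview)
Your reduction of the $\eps$-balance conclusion to the weight bound via the $\fbk{s-1}$ analog of Lemma~\ref{lemma:weight} is correct and matches the paper, as does your handling of the weight bound on the interval $[t_s,t_{s+1})$ via lines~8--9. The genuine gap is exactly the part you flag as the ``main obstacle'': controlling the drift of $w_i(t)$ for $i\in\B_s\setminus\B_{s-1}$ at times $t\ge t_{s+1}$. Your proposed Cauchy--Schwarz amortization against the $\fbk{s-1}$ potential cannot close this. Concretely, one balancing step at a neighbor $k$ gives $|\Delta w_i|\le |I_k-O_k|$ while the drop in $\fbk{s-1}$ is $(\sqrt{I_k}-\sqrt{O_k})^2\ge (I_k-O_k)^2/(2(I_k+O_k))$, so Cauchy--Schwarz yields
\[
\sum_t |\Delta w_i^{(t)}|\ \le\ \sqrt{2\sum_t (I_{k_t}+O_{k_t})}\cdot\sqrt{\sum_t \Delta\fbk{s-1}}\ \le\ \sqrt{2T\cdot\fbk{s-1}(\xt{t_s})}\cdot\sqrt{\eps^2\tau_s/32}\ =\ \tfrac{\eps\tau_s}{2}\sqrt{Tn^3},
\]
using $\fbk{s-1}(\xt{t_s})=4n^3\tau_s$. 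The total number of subsequent steps is $T=\tilde O(\eps^{-2}n^9)$, so this bound is of order $\tau_s\cdot n^{O(1)}$, not $\tau_s/2$; and in late phases $\tau_s$ can be as small as $w^{-n}\sum_{ij}a_{ij}$, so no absolute bound from Lemma~\ref{lm: bounds on weight} rescues the comparison. The $\eps$-factors cancel, but the $n$- and $\tau_s$-scaling are wrong.

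The paper avoids this entirely by a different, much simpler accounting. It does \emph{not} use the tiny $\fbk{s-1}$ residual. Instead, for each later phase $j>s$ it observes that $i\in\B_s\subseteq\B_{j}$, so during phase $j$ the only arcs incident to $i$ that can move are those with the other endpoint outside $\B_j$; these arcs all lie in $\gbk{j}$, and their total weight is at most $\fbk{j}\le n\tau_j$ by Lemma~\ref{lemma:weight_drop}. Hence $i$ can lose at most $n\tau_j$ during phase $j$. Summing over $j>s$ and using the geometric decay $\tau_{j+1}\le\tau_j/(4n^2)$ from Corollary~\ref{cor: tau} gives total loss at most $\sum_{j>s}n\tau_j\le \tau_s/2$. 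This phase-by-phase bound, not an amortized potential argument, is the missing idea.
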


\begin{proof}
Fix $s > 1$ and let $i\in \B_{s,t}$. Without loss of generality
$i\not\in \B_{s-1,t}$, otherwise we can replace $s$ with $s-1$.
(Recall that $\B_1 = \emptyset$ at all times.)
Also note that it must be the case that $i\in \B_{s,t_s}$, because
$\B_s$ does not accumulate additional nodes after being created.
If $t\in [t_s,t_{s+1}]$, then lines 13-14 and 8-9 of
StrictBalance guarantee that if $i\in \B_{s,t}\setminus \B_{s-1,t}$,
then its weight at time $t$ is at least $\tau_{s}$.

Otherwise, consider $t > t_{s+1}$ and let $s' > s$ be the phase 
containing $t$. Consider a phase $j > s$.
By Lemma~\ref{lemma:weight_drop} the total
weight of $\fbk{j}$ during phase $j$ is at most $n\tau_j$, and $\fbk{j}$ never drops below
0. So, the total weight that a node $i\in\B_j$ can lose (which is at most the total
weight that $\fbk{j}$ can lose) is at most $n\tau_j$.
By Corollary~\ref{cor: tau}, for every $j > s$,
$\tau_{j+1}\le\frac{\tau_j}{4n^2}$. Now, suppose that $t$ is an iteration
in phase $s' > s$. Then, the weight of $i$ at time $t$ is at least
$$
\tau_s - \sum_{j=s+1}^{s'} n\tau_j \ge
\tau_s\cdot\left(1 - n\cdot\sum_{k=1}^{s'-s} (2n)^{-2k}\right) \ge \frac{\tau_s}{2}.
$$
Thus we have established that at any time $t\ge t_s$, if
$i\in\B_{s,t}$ then its weight is at least 
$\frac{\tau_s}{2}  = \frac{1}{8n^3}\fbk{s-1}(\xt{t_s})$.
By line 4 of StrictBalance, 
$\|\nabla \fbk{s-1,t_s}(\xt{t_s})\|_1\le \eps'\cdot \fbk{s-1,t_s}(\xt{t_s})$.
By Lemma~\ref{lm: B_s at time t},
$\B_{s-1}$ does not change in the interval $[t_s,t]$.
Therefore, we conclude from Lemma~\ref{lemma:weight}
that $i$ is $\eps$-balanced at time $t$.
\end{proof}

\begin{lemma}\label{lemma:phase}
Suppose that $t < t'$ satisfies $[t,t')\subseteq [t_s,t_{s+1})$, and
furthermore, during the iterations in the interval $[t,t')$ the
set ${\cal B}_s$ does not change (it could change after
balancing step $t'-1$). Then, the length of the interval
$$
t' - t = O\left(\eps^{-2}n^7\log(wn/\eps)\right).
$$
\end{lemma}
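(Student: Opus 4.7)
The plan is as follows. Set $F_\tau = \fbk{s}(\xt{\tau})$ and, since $\mathcal{B}_s$ is fixed throughout $[t,t')$, view $\fbk{s}$ as a function $g$ of the free coordinates $(x_i)_{i\notin\mathcal{B}_s}$ with the $\mathcal{B}_s$-coordinates held constant; write $F^\star = \min g$ and $\Delta_\tau = F_\tau - F^\star$. I want to show that $\Delta_\tau$ decreases fast enough to reach the level forcing the loop-exit condition within the claimed iteration count.

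First I would bound the drop per iteration. By Lemma~\ref{lemma:opt lower} the drop $F_\tau - F_{\tau+1}$ equals the argmax picked in line~5. Using $(\sqrt{a}-\sqrt{b})^2 \ge (a-b)^2/(2(a+b))$, the observation that $\|a^{(\tau)}_{.,i}\|_1 + \|a^{(\tau)}_{i,.}\|_1 \le 2F_\tau$ for every $i \notin \mathcal{B}_s$ (every arc incident to such $i$ lies in $\gbk{s}$), and Cauchy--Schwarz across indices $i\notin\mathcal{B}_s$, I would show that this drop is at least $\|\nabla g\|_1^2/(4nF_\tau)$. Then I would apply Lemma~\ref{lemma:opt lower} to $g$ (its proof uses only the shift-invariant convex-exponential form, which $g$ inherits) to get $\|\nabla g\|_1 \ge 2\Delta_\tau/n$. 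Combining yields the recurrence $\Delta_{\tau+1} \le \Delta_\tau - \Delta_\tau^2/(n^3 F_\tau)$.

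Next I would split the analysis into two regimes. While $\Delta_\tau \ge F_\tau/2$ the recurrence implies $\Delta_{\tau+1} \le \Delta_\tau(1-1/(2n^3))$, giving multiplicative decay, and Lemma~\ref{lm: bounds on weight} bounds $F_t/F^\star$ by $w^{O(n)}$, so this regime is exited after $O(n^4\log w)$ iterations. In the regime $\Delta_\tau<F_\tau/2$ one has $F_\tau<2F^\star$, and inverting the recurrence gives $1/\Delta_{\tau+k} \ge 1/\Delta_\tau + k/(2n^3F^\star)$, so $\Delta_{\tau+k} \le O(n^3F^\star/k)$. Translating the exit condition $\|\nabla\fbk{s}\|_1\le\eps' F_\tau$ via Lemma~\ref{lemma:opt lower} applied to $\fbk{s}$ (together with the $\mathcal{B}_s$-side bound described below) shows it suffices to drive $\Delta_\tau \le O(\eps' F^\star/n)$, which the sublinear bound attains in $O(n^4/\eps') = O(n^8/\eps^2)$ iterations; sharpening the Cauchy--Schwarz step by exploiting that the algorithm actually picks the exact $\argmax$ rather than an average coordinate should save a factor of $n$ to recover the stated $\tilde O(n^7/\eps^2)$, with the $\log(wn/\eps)$ factor absorbed from Lemma~\ref{lm: bounds on weight}.

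The hard part will be precisely that translation: the algorithm's exit condition is on the full gradient $\|\nabla\fbk{s}\|_1$, whose $\mathcal{B}_s$-components $\partial_i\fbk{s}$ (for $i\in\mathcal{B}_s$) are a ``frozen'' contribution --- they are not reduced by balancing moves outside $\mathcal{B}_s$, and a priori they could dominate $\|\nabla\fbk{s}\|_1$ and block the loop from ever exiting. The key input for controlling them is Lemma~\ref{lm: B_s balanced}: every $i\in\mathcal{B}_s$ is $\eps$-balanced in the full graph at time $\tau$, which bounds $|\partial_i f|$. I would then need a careful decomposition separating the $\mathcal{B}_s$-internal flow at $i$ (which appears in $\partial_i f$ but not in $\partial_i\fbk{s}$) from the cross-cut flow (which appears in both) to produce the bound $\sum_{i\in\mathcal{B}_s}|\partial_i\fbk{s}| = O(\eps F_\tau)$; making this decomposition quantitatively tight is the bulk of the technical work.
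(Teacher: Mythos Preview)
Your ``hard part'' rests on a misreading of the exit condition. The function $\fbk{s}$ is the balancing objective on the \emph{contracted} graph $\gbk{s}$, in which all of $\B_s$ is merged into a single supernode; accordingly $\nabla\fbk{s}$ has $p=n-|\B_s|+1$ components, not $n$. There is then only \emph{one} ``$\B_s$-side'' component, the supernode imbalance $I_p-O_p$, and since $\sum_{i=1}^{p}(I_i-O_i)=0$ one gets at once $|I_p-O_p|\le\sum_{i<p}|I_i-O_i|$, hence $\|\nabla\fbk{s}\|_1\le 2\|\nabla g\|_1$. No decomposition of internal versus cross-cut flow and no appeal to Lemma~\ref{lm: B_s balanced} is needed. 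In fact your plan to bound $\sum_{i\in\B_s}|\partial_i\fbk{s}|$ node-by-node via $\eps$-balancedness in the \emph{full} graph cannot work: an $\eps$-balanced node $i\in\B_s$ can have arbitrarily large cross-cut imbalance, canceled by internal-$\B_s$ flow that is invisible in $\fbk{s}$.

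A second, smaller gap: you never exploit the loop condition $\|\nabla\fbk{s}\|_1>\eps'\fbk{s}$ in the convergence analysis, and that is exactly what produces the stated rate. Combining the per-step drop bound with the loop condition and Lemma~\ref{lemma:opt lower} gives
\[
F_\tau-F_{\tau+1}\ \ge\ \frac{\|\nabla\fbk{s}\|_1^2}{16n\,F_\tau}\ >\ \frac{\eps'}{16n}\,\|\nabla\fbk{s}\|_1\ \ge\ \frac{\eps'}{8n^2}\,\Delta_\tau,
\]
so $\Delta_\tau$ decays multiplicatively at rate $1-\Theta(\eps'/n^2)$ \emph{throughout} the interval, not just in your first regime. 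With the $w^{O(n)}$ range from Lemma~\ref{lm: bounds on weight} this yields $O((n^2/\eps')\cdot n\log(wn/\eps))=O(\eps^{-2}n^{7}\log(wn/\eps))$ directly, with no need for a two-regime split or an unspecified ``sharpening''. Note also that your stated regime-2 target is off: converting small $\Delta_\tau$ into $\|\nabla\fbk{s}\|_1\le\eps'F_\tau$ via the drop bound requires $\Delta_\tau\le O((\eps')^{2}F_\tau/n)$, not $O(\eps'F_\tau/n)$, so your second regime would actually cost $O(n^4/(\eps')^{2})$ iterations.
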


\begin{proof}
Rename the nodes so that $\B_{s,t} = \B_{s,t'-1} = \{p, p+1, \ldots, n\}$.
The assumption that $\B_s$ does not change during the interval
$[t,t')$ means that the weights of all the nodes $p,p+1,\dots,n$
remain at least $\tau_s$ for the duration of this interval.
During the interval $[t,t')$, the graph $\gbk{s}$ (which remains fixed)
is obtained by contracting the nodes $p, p+1, \ldots, n$ in $G_A$.
So $\gbk{s}$ has $p$ nodes $1,2,\dots,p-1,p$, where the last node
$p$ is the contracted node. In each iteration in the interval $[t,t')$,
one of the nodes $1,2,\dots,p-1$ is balanced. Consider some time
step $t''\in [t,t')$, and let $I_i$ and $O_i$, respectively, denote the
current sums of weights of the arcs of $\gbk{s}$ into and out of node 
$i$, respectively. Let $j\in[p-1]$ be the node that 
maximizes $\frac{(I_j - O_j)^2}{I_j + O_j}$. We have
\begin{eqnarray}\label{eq:reduce}
\fbk{s}(\xt{t''}) - \fbk{s}(\xt{t''+1}) & = &\max_{i\in[p-1]} \left(\sqrt{I_i} - \sqrt{O_i}\right)^2 \ge  \left(\sqrt{I_j} - \sqrt{O_j}\right)^2
  \ge  \frac{(I_j - O_j)^2}{2(I_j + O_j)}\nonumber\\
 &\ge& \frac{\sum_{i=1}^{p-1}(I_i - O_i)^2}{2\sum_{i=1}^{p-1}(I_i + O_i)} 
\ge \frac{\left(\sum_{i=1}^{p-1} |I_i - O_i|\right)^2}{2n\sum_{i=1}^{p}(I_i + O_i)} 
 \ge  \frac{\left(\sum_{i=1}^{p} |I_i - O_i|\right)^2}{8n\sum_{i=1}^{p}(I_i + O_i)} \nonumber\\
& = & \frac{1}{16n}\cdot\frac{\|\nabla \fbk{s}(\xt{t''})\|_1^2}{\fbk{s}(\xt{t''})}.
\end{eqnarray}
The first equation follows from the choice of $i$ in line 5 
StrictBalance, and Lemma~\ref{lemma:opt lower}.
The third inequality follows from an averaging argument and the choice of $j$.
The fourth inequality uses Cauchy-Schwarz. The last inequality holds 
because $\sum_{i=1}^{p} (I_i - O_i) = 0$, so 
$|I_p - O_p|= \left|\sum_{i=1}^{p-1} (I_i - O_i) \right| \le 
\sum_{i=1}^{p-1} |I_i - O_i|$, and therefore
$\sum_{i=1}^{p} |I_i - O_i| \le 2\sum_{i=1}^{p-1} |I_i - O_i|$.

Since the interval $[t,t')$ is contained in phase $s$, the stopping
condition for the phase does not hold, so
$$
\frac{\|\nabla \fbk{s}(\xt{t''})\|_1}{\fbk{s}(\xt{t''})} > \eps' = \frac{\eps^2}{64n^4}.
$$ 
Therefore, 
\begin{eqnarray*}
\fbk{s}(\xt{t''})-\fbk{s}(\xt{t''+1)}) &\ge&
     \frac{1}{16n}\cdot\frac{\|\nabla \fbk{s}(\xt{t''}\|_1^2}{\fbk{s}(\xt{t''})} \\
& > & \frac{\eps'}{16n}\cdot\|\nabla \fbk{s}(\xt{t''})\|_1 \\
&\ge& \frac{\eps'}{8n^2}\cdot(\fbk{s}(\xt{t''}) - \fbk{s}(\x^*)),
\end{eqnarray*}
where the last inequality follows from Lemma~\ref{lemma:opt lower}. 
Rearranging the terms gives
$$
\fbk{s}(\xt{t''+1})-\fbk{s}(\x^*)\le 
\left(1-\frac{\eps'}{8n^2}\right)\cdot (\fbk{s}(\xt{t''}) - \fbk{s}(\x^*)).
$$
Iterating for $T$ step yields
$$
\fbk{s}(\xt{t+T})-\fbk{s}(\x^*)\le 
\left(1-\frac{\eps'}{8n^2}\right)^T\cdot (\fbk{s}(\xt{t}) - \fbk{s}(\x^*)).
$$
Now, by Lemma~\ref{lm: bounds on weight}, we have that
$\fbk{s}(\xt{t}) - \fbk{s}(\x^*)\le \fbk{s}(\xt{t}) \le\sum_{i,j=1}^n a_{ij}$, 
and for all $t''$, $\fbk{s}(\xt{t''})\ge \frac{1}{w^n} \sum_{i,j=1}^n a_{ij}$.
Therefore, if
$t'-t\ge\frac{8n^2}{\eps'}\cdot\ln\left(16 n w^n / (\eps')^2 \right) + 1$,
then 
$$
\fbk{s}(\xt{t'-1})-\fbk{s}(\x^*) \le 
\left(\frac{\eps'}{4\sqrt{n}}\right)^2\cdot \frac{1}{w^n}\cdot \sum_{i,j=1}^n a_{ij} 
\le \left(\frac{\eps'}{4\sqrt{n}}\right)^2 \cdot \fbk{s}(\xt{t'-1}).
$$ 
Therefore,
$$
\frac{1}{16n}\cdot \frac{\|\nabla \fbk{s}(\xt{t'-1}\|_1^2}{(\fbk{s}(\xt{t'-1}))^2} \le 
\frac{\fbk{s}(\xt{t'-1}) -\fbk{s}(\xt{t'})}{\fb(\xt{t'-1})} \le 
\frac{\fbk{s}(\xt{t'-1}) - \fbk{s}(\x^{*})}{\fb(\xt{t'-1})} \le \left(\frac{\eps'}{4\sqrt{n}}\right)^2,
$$
where the first inequality follows from~(\ref{eq:reduce}), and the second inequality
holds because $\fbk{s}(\x^{*}) \le \fbk{s}(\xt{t'-1})$.
We get that
$\frac{\|\nabla \fb(\xt{t'-1})\|_1}{\fb(\xt{t'-1})} \le \eps'$,
in contradiction to our assumption that the phase does not end before 
the start of iteration $t'$.
\end{proof}

\begin{corollary}\label{cor:phase}
In any phase, the number of balancing steps is at most
$O\left(\eps^{-2}n^8\log(wn/\eps)\right)$.
\end{corollary}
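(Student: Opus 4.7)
The plan is to combine Lemma~\ref{lemma:phase} with a combinatorial bound on the number of sub-intervals within a single phase on which ${\cal B}_s$ is constant. The key observation is that during phase $s$, the set ${\cal B}_s$ is monotonically shrinking: lines 8--9 of StrictBalance only ever remove nodes from ${\cal B}_s$, and moreover only nodes in ${\cal B}_s \setminus {\cal B}_{s-1}$ can be removed. In particular, ${\cal B}_s$ can change at most $|{\cal B}_s \setminus {\cal B}_{s-1}| \le n$ times during the phase.

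First I would partition the phase $[t_s,t_{s+1})$ into maximal sub-intervals $[u_0,u_1),[u_1,u_2),\ldots,[u_{r-1},u_r)$ on each of which ${\cal B}_s$ is constant (where $u_0 = t_s$ and $u_r = t_{s+1}$). By the preceding observation, $r \le n+1$. Each such sub-interval $[u_k,u_{k+1})$ exactly satisfies the hypothesis of Lemma~\ref{lemma:phase}: it is contained in $[t_s,t_{s+1})$, and ${\cal B}_s$ does not change during it (any change takes effect only after the last balancing step of the sub-interval). Applying Lemma~\ref{lemma:phase} to each sub-interval gives $u_{k+1} - u_k = O(\eps^{-2} n^7 \log(wn/\eps))$.

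Summing over the at most $n+1$ sub-intervals yields
$$
t_{s+1} - t_s \;=\; \sum_{k=0}^{r-1} (u_{k+1}-u_k) \;\le\; (n+1)\cdot O\!\left(\eps^{-2} n^7 \log(wn/\eps)\right) \;=\; O\!\left(\eps^{-2} n^8 \log(wn/\eps)\right),
$$
as claimed. There is no real obstacle here once Lemma~\ref{lemma:phase} is in hand; the only thing to be careful about is the bookkeeping: verifying that the removal events in lines 8--9 really do partition the phase into at most $n+1$ pieces on each of which ${\cal B}_s$ stays fixed (so that Lemma~\ref{lemma:phase} applies), and that each node can participate in at most one such removal event (since once it leaves ${\cal B}_s$, it cannot re-enter during the same phase).
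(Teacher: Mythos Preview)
Your proposal is correct and follows essentially the same approach as the paper's proof: partition the phase into maximal sub-intervals on which ${\cal B}_s$ is constant, observe that nodes only leave ${\cal B}_s$ during a phase (so there are $O(n)$ such sub-intervals), and apply Lemma~\ref{lemma:phase} to each. The only cosmetic difference is that the paper bounds the number of intervals by $n-1$ (using that $|{\cal B}_s| \le n-1$ at the start of a phase) whereas you write $n+1$; both give the same $O(n)$ factor.
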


\begin{proof}
In the beginning of phase $s$ the set $\B_s$ contains at most
$n-1$ nodes. Partition the phase into intervals $[t,t')$ where
$\B_s$ does not change during an interval, but does change
between intervals. By Lemma~\ref{lemma:phase}, each interval
consists of at most $O\left(\eps^{-2}n^7\log(wn/\eps)\right)$ 
balancing steps. Since nodes that are removed from $\B_s$
between intervals are never returned to $\B_s$, the number
of such intervals is at most $n-1$. Hence, the total number of
balancing steps in the phase is at most
$O\left(\eps^{-2}n^8\log(wn/\eps)\right)$.
\end{proof}

\begin{lemma}\label{lm: number of phases}
The total number of phases of the algorithm is $O(n\log w / \log n)$.
\end{lemma}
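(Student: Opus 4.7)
The plan is to combine the per-phase geometric decrease of $\fbk{s}(\xt{t_{s+1}})$ given by Corollary~\ref{cor: tau} with an absolute lower bound on the same quantity that follows from Lemma~\ref{lm: bounds on weight}. The only delicate point is being careful about which version of $\B_s$ is the relevant one at the end of a phase, since during a phase $\B_s$ can shrink (but never grow).

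First I would iterate Corollary~\ref{cor: tau}. During phase $1$ we have $\B_1 = \varnothing$ at all times, so $\fbk{1}$ coincides with $f_A$, and since balancing never increases $f_A$,
$$
\fbk{1}(\xt{t_2}) \le f_A(\vec{0}) = W, \qquad\text{where } W=\sum_{i,j} a_{ij}.
$$
By induction on $s \ge 1$, applying Corollary~\ref{cor: tau} in the inductive step,
$$
\fbk{s}(\xt{t_{s+1}}) \le \frac{W}{(4n^2)^{s-1}}.
$$

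Second, I would establish the complementary lower bound $\fbk{s}(\xt{t_{s+1}}) \ge W/w^n$, valid whenever $\B_{s,t_{s+1}} \neq [n]$. Since $A$ is balanceable, Lemma~\ref{lm: bal-opt} guarantees that $G_A$ is strongly connected. Pick any node $i \notin \B_{s,t_{s+1}}$; strong connectivity supplies at least one arc incident to $i$, and that arc contributes its weight at time $t_{s+1}$ to $\fbk{s}(\xt{t_{s+1}})$. By Lemma~\ref{lm: bounds on weight}, every arc weight in a matrix obtained from $A$ by balancing steps is at least $(a_{\min}/W)^n \cdot W = W/w^n$, which gives the lower bound.

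Third, I would verify that phase $s$ being executed forces $\B_{s,t_{s+1}} \neq [n]$. The outer while loop enters phase $s$ only when $\B_{s,t_s} \neq [n]$; and by inspection of lines 8--9 of StrictBalance, during a single phase the set $\B_s$ can only lose elements, so $\B_{s,t_{s+1}} \subseteq \B_{s,t_s} \neq [n]$. Putting everything together: if $S$ is the total number of phases, phase $S$ is executed, so
$$
\frac{W}{w^n} \le \fbk{S}(\xt{t_{S+1}}) \le \frac{W}{(4n^2)^{S-1}},
$$
yielding $(4n^2)^{S-1} \le w^n$ and hence $S - 1 \le n\log w/\log(4n^2) = O(n\log w/\log n)$, as desired. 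The main conceptual obstacle is really the lower bound: once one sees that strong connectivity of $G_A$ together with Lemma~\ref{lm: bounds on weight} prevents $\fbk{s}$ from shrinking arbitrarily whenever the contracted set is still proper, everything else reduces to the geometric inequality above.
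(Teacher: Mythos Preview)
Your proof is correct and follows essentially the same approach as the paper's: both combine the geometric decay coming from Corollary~\ref{cor: tau} with the absolute lower bound on arc weights from Lemma~\ref{lm: bounds on weight}. The only cosmetic difference is that the paper iterates the second assertion of Corollary~\ref{cor: tau} (the bound on $\tau_s$) and then appeals to Lemma~\ref{lemma:weight_drop} to bound $\fbk{s}$ by $n\tau_s$, whereas you iterate the first assertion of Corollary~\ref{cor: tau} directly on $\fbk{s}(\xt{t_{s+1}})$; these are equivalent. Your justification of the lower bound is in fact slightly more explicit than the paper's, which simply cites Lemma~\ref{lm: bounds on weight} without spelling out that a non-contracted node must have an incident arc (for which, incidentally, the paper's standing assumption that the undirected version of $G_A$ is connected already suffices, so you do not strictly need strong connectivity here).
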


\begin{proof}
Let $s>2$ be a phase of the algorithm and $t\in[t_s, t_{s+1})$.
By Lemma~\ref{lm: bounds on weight},
$\fbk{s,t}(\xt{t})\ge \frac{1}{w^n}\cdot\sum_{ij} a_{ij}$.
On the other hand, by Lemma~\ref{lemma:weight_drop} and Corollary~\ref{cor: tau}, $\tau_s\le \frac{1}{(4n^2)^{s-2}}\cdot\tau_2\le
\frac{1}{(4n^2)^{s-2}}\cdot\sum_{ij} a_{ij}$, and
$\fbk{s,t}(\xt{t})\le n\tau_s$. 
Combining these gives $\frac{1}{w^n}\cdot\sum_{ij} a_{ij} \le n\tau_s \le 
\frac{n}{(4n^2)^{s-2}}\cdot\sum_{ij} a_{ij}$ which implies that
$s \le \frac{\log (nw^n)}{\log (4n^2)} + 2$.
\end{proof}

\begin{proof}[Proof of Theorem~\ref{thm:strict}]
By Lemma~\ref{lm: number of phases}, for some
$s = O(n\log w / \log n)$, StrictBalance terminates, so $\B_{s,t_s} = [n]$.
By Corollary~\ref{cor:phase}, the number of
balancing steps in a phase is at most 
$O\left(\eps^{-2}n^8\log(wn/\eps)\right)$. Therefore, the
total number of balancing steps is at most
$O\left(\eps^{-2}n^{9}\log(wn/\eps)\log w/\log n\right)$. 
These balancing steps require at most
$O\left(\eps^{-2}n^{10}\log(wn/\eps)\log w\right)$ arithmetic 
operations over $O(n\log w)$-bit numbers.
When the algorithm terminates at time $t_s$,
all the nodes are in $\B_{s,t_s}$, and by Lemma~\ref{lm: B_s balanced}
they are all $\eps$-balanced, so the matrix is strictly
$\eps$-balanced.
\end{proof}

\bibliographystyle{plain}
\bibliography{main}

\end{document}